\PassOptionsToPackage{expansion=false}{microtype}
\documentclass{lmcs}

\usepackage[utf8]{inputenc}
\usepackage[T1]{fontenc}
\usepackage{mathtools}

\newcommand{\R}{\mathbb{R}}
\newcommand{\Z}{\mathbb{Z}}
\newcommand{\Q}{\mathbb{Q}}
\newcommand{\N}{\mathbb{N}}
\newcommand{\Qp}{\mathbb{Q}_{+}}
\newcommand{\FO}{\mathrm{FO}}
\newcommand{\dom}{\mathrm{dom}}
\newcommand{\av}[1]{\vec{#1}}

\title{Extending the Ginsburg--Spanier Theorem to Functions and Mixed Arithmetic}

\author[Finkel]{Alain Finkel\lmcsorcid{0000-0002-6699-1815}}
\address{Université Paris-Saclay, CNRS, ENS Paris-Saclay, LMF, 91190, Gif-sur-Yvette, France}
\email{finkel@ens-paris-saclay.fr}

\author[Leroux]{J\'er\^ome Leroux\lmcsorcid{0000-0002-4932-6164}}
\address{LaBRI, Universit\'e de Bordeaux, CNRS, LaBRI, 33405, Talence, France}
\email{leroux@labri.fr}

\begin{document}
\maketitle

\begin{abstract}
We study sets and functions definable in three classical additive theories:
Presburger arithmetic (first-order logic over the integers with addition and order),
the real additive theory (first-order logic over the reals with addition and order),
and the mixed additive theory (first-order logic over both reals and integers with
addition and order).

The Ginsburg--Spanier theorem characterizes the sets definable in Presburger arithmetic
as exactly the semi-linear sets.
We extend this characterization in two directions.

First, we show that the functions definable in Presburger arithmetic are exactly the
piecewise linear functions, and that the same holds for the real additive theory.
The proofs are direct algebraic arguments using only a stability lemma and the
Ginsburg--Spanier theorem.

Second, we introduce \emph{semi-polinear sets} as the analogue of semi-linear sets
for the mixed additive theory, and prove that the mixed-definable sets are exactly
the semi-polinear sets.
We further show that the functions definable in the mixed additive theory are exactly
the \emph{piecewise-simple} functions, a new class of functions that are linear
separately in the integer part and in the fractional part of the argument,
but with potentially different linear coefficients for each part.

These algebraic characterizations unify the three theories in a single geometric
framework.
The proofs are purely algebraic, using only linear algebra and the Ginsburg--Spanier
theorem, without reference to automata, quantifier elimination, or machines.
We also correct an error in a prior characterization of mixed-definable sets.
\end{abstract}

\section{Introduction}

\paragraph{Context.}
Presburger arithmetic $\FO(\Z,+,\leq)$~\cite{Pre29} is the decidable first-order theory
of the integers with addition and order.
The Ginsburg--Spanier theorem~\cite{GS66} characterizes its definable sets as exactly
the semi-linear sets, i.e., finite unions of linear sets.
The real additive theory $\FO(\R,+,\leq)$ is decidable via Fourier--Motzkin elimination,
with definable sets being exactly the finite unions of polyhedral convex sets~\cite{Wei88}.
The mixed additive theory $\FO(\R,\Z,+,\leq)$ was shown decidable by
Weispfenning~\cite{Wei99} via quantifier elimination, and later by Boigelot,
Jodogne and Wolper~\cite{BJW05} via an alternative approach using weak B\"uchi automata.

Concerning Presburger \emph{functions}, prior work studied
computability by flat (without nested loops) counter automata~\cite{MR67,Che76,Gur85}
or by reversal-bounded counter automata~\cite{GI79},
closure under composition~\cite{IL81}, and
evaluation in linear space and quadratic time~\cite{CI83}.

\paragraph{Motivation.}
The three additive theories $\FO(\Z,+,\leq)$, $\FO(\R,+,\leq)$, and $\FO(\R,\Z,+,\leq)$
are well-studied logics with applications across verification, model checking, and
constraint solving.
A natural question is: what are the \emph{functions} definable in these logics,
described in purely geometric terms?
Prior characterizations of Presburger-definable functions all relied on operational or
logical descriptions --- counter automata~\cite{MR67,Che76,GI79}, closure under
composition~\cite{IL81}, or logical normal forms --- and the same holds for the mixed
theory via quantifier elimination~\cite{Wei99} or B\"uchi automata~\cite{BJW05}.
Surprisingly, a direct geometric answer was missing from the literature.

\begin{table}[h]
\centering
\renewcommand{\arraystretch}{1.3}
\resizebox{\textwidth}{!}{%
\begin{tabular}{|l|c|c|c|}
\hline
 & $\FO(\Z,+,\leq)$ & $\FO(\R,+,\leq)$ & $\FO(\R,\Z,+,\leq)$ \\
\hline
\textbf{Decidability} & \cite{Pre29}, 2-EXPSPACE~\cite{FR74,Ber80}
                      & via Fourier--Motzkin QE
                      & \cite{Wei99,BJW05} \\
\hline
\textbf{Logic / QE}   & Cooper; singly-exp.~\cite{HKMZ24}
                      & Fourier--Motzkin
                      & extended logic~\cite{Wei99} \\
\hline
\textbf{Automata}     & counter automata~\cite{MR67,Che76,GI79}
                      & ---
                      & weak B\"uchi aut.~\cite{BJW05} \\
\hline
\textbf{Sets}         & semi-linear~\cite{GS66}
                      & polyhedral convex~\cite{Wei88}
                      & \textbf{semi-polinear} \\
\hline
\textbf{Functions}    & \textbf{piecewise linear} & \textbf{piecewise linear} & \textbf{piecewise-simple} \\
\hline
\end{tabular}%
}
\caption{Decidable additive theories: logical, automata-theoretic, and algebraic
characterizations. \textbf{Bold} entries are results of the present paper.
``$---$'' indicates that no automata characterization of $\FO(\R,+,\leq)$-definable
sets is known; this appears to be open.}
\label{tab:summary}
\end{table}

\paragraph{Related work.}
The Ginsburg--Spanier theorem~\cite{GS66} is the starting point.
Weispfenning~\cite{Wei99} describes mixed-linear sets but with an error corrected
here (Remark~\ref{rem:wei99}).
The piecewise linear characterization of $\FO(\Z,+,\leq)$-definable functions is
implicitly present in the literature (applying~\cite{GS66} to $G_f$) but was
never stated explicitly as a theorem with a direct algebraic proof.

\paragraph{Contributions.}
\textbf{(1) Presburger functions are piecewise linear.}
A function is definable in $\FO(\Z,+,\leq)$ if and only if it is piecewise linear
on a Presburger partition of its domain (Theorem~\ref{thm:integer}).
The proof uses only a stability argument and the Ginsburg--Spanier theorem.

\textbf{(2) Extension of the Ginsburg--Spanier theorem to the mixed theory.}
We introduce \emph{semi-polinear sets} as the $\FO(\R,\Z,+,\leq)$ analogue of
semi-linear sets (Theorem~\ref{thm:mixed-sets}), and show that functions definable in
$\FO(\R,\Z,+,\leq)$ are exactly the piecewise-simple functions
(Theorem~\ref{thm:mixed-functions}).
We correct an error in Weispfenning~\cite{Wei99} regarding the structure of
mixed-linear sets of $\R$ (Remark~\ref{rem:wei99}).

\paragraph{On the depth of the results.}
The proofs rely on standard techniques from linear algebra and the Ginsburg--Spanier
theorem, and may appear elementary.
We argue that the results are nonetheless of real interest.
First, \emph{surprisingly}, they are \emph{new}: despite the maturity of the field, no paper had stated
the piecewise-linear or piecewise-simple characterizations as explicit theorems with
direct algebraic proofs.
Second, the proofs are \emph{self-contained and short}, making the results easy to
verify and teach.
Third, they \emph{unify} three theories in a single geometric framework, as
Table~\ref{tab:summary} illustrates: the algebraic row was the only one missing.
Fourth, we \emph{correct an error} in the only prior work on mixed-linear
sets~\cite{Wei99}.
Fifth, the notions of piecewise-linear and piecewise-simple functions are defined
\emph{without reference to logic or automata}, making them useful independently
in verification, synthesis, and constraint solving.

\paragraph{Organization.}
Section~\ref{sec:prelim} introduces notations and the three logics of interest.
Section~\ref{sec:real} characterizes definable functions in the real additive theory.
Section~\ref{sec:integer} characterizes definable functions in the integer additive
theory, recalling the Ginsburg--Spanier theorem.
Section~\ref{sec:mixed} introduces semi-polinear sets and proves the extension of
the Ginsburg--Spanier theorem to the mixed case, then characterizes mixed definable
functions as piecewise-simple functions.

\section{Preliminaries}
\label{sec:prelim}

\subsection{Notations}

We denote by $\R$, $\Q$, $\Qp$, $\Z$, $\N$ the set of real numbers, the set of
rational values, the set of non-negative rational values, the set of integers, and the set
of non-negative integers.
The $i$-th component of a vector $\av{x}$ is denoted by $\av{x}[i]$.
The \emph{integer-part function} $[\cdot] : \R \to \Z$ is defined by $[x] = \max\{k \in \Z \mid k \leq x\}$;
for $\av{a} \in \R^n$, $[\av{a}]$ denotes the component-wise integer part.
Functions $f : X \to Y$ are \emph{partially defined} over \emph{definition domains}
$\dom(f) \subseteq X$.
The \emph{graph} of $f$ is the set
$G_f = \{(x,y) \in X \times Y \mid x \in \dom(f) \wedge y = f(x)\}$.
The restriction of $f : X \to Y$ to a subset $D \subseteq \dom(f)$ is denoted by
$f|_D : X \to Y$.

\begin{lemma}
\label{lem:decomp}
Decompositions of $G_f$ into finite unions $G_f = G_1 \cup \cdots \cup G_k$ correspond to
decompositions of $\dom(f)$ into finite unions $\dom(f) = D_1 \cup \cdots \cup D_k$ where
$G_i$ is the graph of $f_i$ with $f_i = f|_{D_i}$.
\end{lemma}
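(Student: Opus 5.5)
The plan is to make the correspondence explicit by two mutually inverse constructions and to check that each produces the required kind of decomposition. Let $\pi : X \times Y \to X$ be the projection $\pi(x,y) = x$. Because $G_f$ is a graph, the restriction of $\pi$ to $G_f$ is a bijection onto $\dom(f)$, with inverse $x \mapsto (x, f(x))$. Everything follows from this one observation, since bijections preserve finite unions.

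\textbf{From graphs to domains.} Start from a decomposition $G_f = G_1 \cup \cdots \cup G_k$ and set $D_i = \pi(G_i)$. Each $D_i$ lies in $\pi(G_f) = \dom(f)$, and projection commutes with unions, so $\dom(f) = D_1 \cup \cdots \cup D_k$. It remains to check that $G_i$ is exactly the graph of $f|_{D_i}$. If $(x,y) \in G_i \subseteq G_f$, then $x \in D_i$ and $y = f(x)$, so $(x,y) \in G_{f|_{D_i}}$. Conversely, let $x \in D_i$. Then there is some $y$ with $(x,y) \in G_i$, and since $G_i \subseteq G_f$ we get $y = f(x)$. Hence $(x, f(x)) \in G_i$.

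\textbf{From domains to graphs.} Start from a decomposition $\dom(f) = D_1 \cup \cdots \cup D_k$ and set $G_i = G_{f|_{D_i}} = \{(x, f(x)) \mid x \in D_i\}$. Each $G_i$ is contained in $G_f$. Every $(x, f(x)) \in G_f$ has $x$ in some $D_i$, and therefore lies in that $G_i$. So $G_f = G_1 \cup \cdots \cup G_k$.

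\textbf{Mutual inverseness.} The two constructions undo each other. We have $\pi(G_{f|_{D}}) = D$, and the first paragraph already showed that $G_{f|_{\pi(G_i)}} = G_i$. So the correspondence is a bijection between the two kinds of decomposition, which preserves both the number of pieces and their indexing.

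There is no real obstacle here. The only point needing care is the inclusion $G_{f|_{D_i}} \subseteq G_i$ in the first direction, and it relies precisely on $G_i \subseteq G_f$ being functional: each $x$ has a unique $y$ with $(x,y) \in G_f$.
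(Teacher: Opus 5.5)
Your proof is correct and takes the same approach as the paper: set $D_i$ to be the projection of $G_i$, and use that $G_i \subseteq G_f$ inherits functionality from $G_f$, so $G_i$ is exactly the graph of $f|_{D_i}$. You also write out the converse construction $G_i = G_{f|_{D_i}}$ and check that the two constructions are mutually inverse, which the paper leaves implicit.
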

\begin{proof}
Given $G_f = G_1 \cup \cdots \cup G_k$, set $D_i = \{\av{x} \mid \exists \av{y} : (\av{x},\av{y}) \in G_i\}$.
Since $G_f$ is a function graph, each $G_i \subseteq G_f$ is also a function graph
(a point $(\av{x},\av{y}_1) \in G_i$ and $(\av{x},\av{y}_2) \in G_f$ force $\av{y}_1 = \av{y}_2$),
and $G_i$ is the graph of $f|_{D_i}$.
\end{proof}

\subsection{The Three Logics}

The first-order logics $\FO(\Z,+,\leq)$, $\FO(\R,+,\leq)$, and $\FO(\R,\Z,+,\leq)$ are
respectively called the \emph{integer additive theory}, the \emph{real additive theory},
and the \emph{mixed additive theory}.
Let $\mathcal{L}$ be one of these theories and $D \subseteq \R^n$.
A finite partition $\Pi = \{X_1,\ldots,X_k\}$ of $D$ is said \emph{definable in
$\mathcal{L}$} if $X_i$ is definable in $\mathcal{L}$ for any $i$.

A function $f : \R^n \to \R^q$ is said \emph{definable in $\mathcal{L}$} if there exists a
formula $\psi(\av{x},\av{y})$ in $\mathcal{L}$ encoding its graph $G_f$.
Observe that deciding if $\psi(\av{x},\av{y})$ encodes the graph of a function reduces
to the non-satisfiability of:
\[
\psi(\av{x},\av{y}_1) \wedge \psi(\av{x},\av{y}_2) \wedge \av{y}_1 \neq \av{y}_2.
\]
As the logic $\FO(\R,\Z,+,\leq)$ is decidable~\cite{Wei99}, this property is decidable.

\subsection{Piecewise Linear and Stable Functions}

Let $f : \R^n \to \R^q$.
We say that $f$ is \emph{$\Q$-linear} (or simply \emph{linear}) if there exist a matrix
$M \in \Q^{q \times n}$ and a vector $\av{v} \in \Q^q$ such that $f(\av{a}) = M\av{a} +
\av{v}$ for any $\av{a} \in \dom(f)$.
We call such functions
\emph{linear} even though they are sometimes called \emph{affine} in other contexts.
We say that $f$ is \emph{piecewise $\Q$-linear} (or simply \emph{piecewise linear}) if
there exists a finite partition $\Pi$ of $\dom(f)$ such that $f|_X$ is $\Q$-linear for any
$X \in \Pi$; if $\Pi$ is definable in $\mathcal{L}$, we say that $f$ is
\emph{$\mathcal{L}$-piecewise linear}.

We say that $f$ is \emph{stable by addition} (or simply \emph{stable}) if
$\av{a}_1 + \av{a}_2 \in \dom(f)$ and
$f(\av{a}_1 + \av{a}_2) = f(\av{a}_1) + f(\av{a}_2)$
for any $\av{a}_1, \av{a}_2 \in \dom(f)$.

\section{Stable Functions and the Real Additive Theory}
\label{sec:real}

This section establishes two results: that stable functions $f : \R^n \to \R^q$ with $G_f \subseteq \Q^n \times \Q^q$ are
linear (Proposition~\ref{prop:stable}), and that functions definable in $\FO(\R,+,\leq)$
are exactly the piecewise linear ones (Theorem~\ref{thm:real}).

We first recall some elements of linear algebra.
A $\Q$-\emph{vector space} (or simply a \emph{vector space}) of $\Q^n$ is a set
$V \subseteq \Q^n$ such that (1) $\av{0} \in V$, (2) $\av{v}_1 + \av{v}_2 \in V$ for any
$\av{v}_1, \av{v}_2 \in V$, and (3) $\lambda\av{v}$ is in $V$ for any $\lambda \in \Q$ and
any $\av{v} \in V$.
A \emph{basis} of a vector space $V$ is a finite (potentially empty) sequence
$(\av{v}_1,\ldots,\av{v}_d)$ of $d$ vectors in $V$ such that for any $\av{v} \in V$ there
exists a unique sequence $(\lambda_1,\ldots,\lambda_d)$ of elements in $\Q$ satisfying
$\av{v} = \sum_{i=1}^d \lambda_i \av{v}_i$.

\begin{proposition}
\label{prop:stable}
For any stable function $f : \R^n \to \R^q$ with $G_f \subseteq \Q^n \times \Q^q$,
there exists a matrix $M \in \Q^{q \times n}$ such that $f(\av{a}) = M\av{a}$ for any
$\av{a} \in \dom(f)$.
\end{proposition}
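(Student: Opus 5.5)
Let $D = \dom(f)$. If $D$ is empty, any matrix $M$ works. Otherwise, the plan is to extend $f$ to a $\Q$-linear map on the $\Q$-vector space $V$ spanned by $D$, and then extend that map arbitrarily (say by zero on a complement) to all of $\Q^n$. Note that stability forces $D$ to be closed under addition and $f(2\av{a}) = 2 f(\av{a})$.

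First, applying stability to $(\av{a},\av{a})$ repeatedly gives $f(k\av{a}) = k f(\av{a})$ for every integer $k \geq 1$, by induction on $k$. More generally, $f\bigl(\sum_i k_i \av{a}_i\bigr) = \sum_i k_i f(\av{a}_i)$ for any $\av{a}_i \in D$ and $k_i \in \N$ not all zero. So $f$ is additive on the additive monoid generated by $D$ (minus $\av{0}$, unless $\av{0}\in D$; in that case $f(\av 0)=2f(\av 0)$ forces $f(\av0)=\av0$).

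The key step, which I expect to be the main obstacle, is well-definedness of the extension. I will define $g$ on $V$ by
\[
g\Bigl(\sum_i \lambda_i \av{a}_i\Bigr) = \sum_i \lambda_i f(\av{a}_i)
\]
for $\av{a}_i \in D$ and $\lambda_i \in \Q$. To show $g$ is well defined, suppose $\sum_i \lambda_i \av{a}_i = \sum_j \mu_j \av{b}_j$. Clear denominators by a common positive integer $N$ and move negative coefficients to the other side. This yields an identity $\sum_i k_i \av{c}_i = \sum_j l_j \av{d}_j$ with $k_i, l_j \in \N$ and all $\av{c}_i, \av{d}_j \in D$, with both sides nonempty sums (if one side is empty, it is $\av0$ and the other side's element lies in $D$, so handle it with $f(\av0)=\av0$, using closure under addition to see $\av0\in D$). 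Both sides are then the same element $\av{e}$ of $D$, by closure under addition. The previous paragraph gives $f(\av{e}) = \sum_i k_i f(\av{c}_i) = \sum_j l_j f(\av{d}_j)$. Dividing by $N$ and rearranging gives the required equality of the two expressions for $g$.

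Once $g$ is well defined, it is visibly $\Q$-linear on $V$ and agrees with $f$ on $D$. I will choose a basis of $V$ (taken from $D$), complete it to a basis of $\Q^n$, and set $g$ to $\av{0}$ on the added vectors. Writing this linear map on $\Q^n$ in the canonical basis gives a matrix $M \in \Q^{q\times n}$ with $f(\av{a}) = M\av{a}$ for $\av{a} \in D$. The hypothesis $G_f \subseteq \Q^n\times\Q^q$ is exactly what ensures that $V\subseteq\Q^n$ and that $M$ has rational entries.
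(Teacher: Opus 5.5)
Your proof is correct and uses the same core idea as the paper: clear denominators, split the integer coefficients into non-negative parts, and apply stability to the resulting identity between $\N$-combinations of elements of $\dom(f)$. The paper is slightly more economical. It fixes a basis of $V$ drawn from $\dom(f)$, defines $M$ on that basis, and checks each $\av{a}\in\dom(f)$ against it, so it never needs a general well-definedness argument; your explicit treatment of the case where one side reduces to $\av{0}$ (via $f(\av{0})=\av{0}$) fills in a detail the paper leaves implicit.
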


\begin{proof}
As $G_f \subseteq \Q^n \times \Q^q$, we deduce that $\dom(f) \subseteq \Q^n$.
Let us consider the vector space $V$ generated by $\dom(f)$.
There exists a basis $(\av{a}_1,\ldots,\av{a}_d)$ of $V$ such that
$\av{a}_i \in \dom(f)$ for any $i$.
As $G_f \subseteq \Q^n \times \Q^q$ we deduce that $f(\av{a}_i) \in \Q^q$ for any $i$.
As $(\av{a}_1,\ldots,\av{a}_d)$ is a basis of $V$ and
$(f(\av{a}_1),\ldots,f(\av{a}_d))$ is a sequence of vectors in $\Q^q$,
there exists a matrix $M \in \Q^{q \times n}$ such that $M\av{a}_i = f(\av{a}_i)$
for any $i$.

Let us prove that $f(\av{a}) = M\av{a}$ for any $\av{a} \in \dom(f)$.
Since the case $d = 0$ is immediate we assume $d \geq 1$.
Let $\av{a} \in \dom(f)$.
As $\dom(f) \subseteq V$ and $(\av{a}_1,\ldots,\av{a}_d)$ is a basis of $V$,
there exists $(\lambda_1,\ldots,\lambda_d)$ in $\Q^d$ such that
$\av{a} = \sum_{i=1}^d \lambda_i \av{a}_i$.
Let $r \geq 1$ be large enough so that $r\lambda_i \in \Z$ for any $i$.
There exist $\mu_i^+, \mu_i^- \in \N$ such that $r\lambda_i = \mu_i^+ - \mu_i^-$.
We deduce:
\[
r\av{a} + \sum_{i=1}^d \mu_i^- \av{a}_i = \sum_{i=1}^d \mu_i^+ \av{a}_i.
\]
The image by $f$ of the two previous vectors can be developed thanks to the stability of
$f$ by addition:
\[
rf(\av{a}) + \sum_{i=1}^d \mu_i^- f(\av{a}_i) = \sum_{i=1}^d \mu_i^+ f(\av{a}_i).
\]
By replacing $\mu_i^+ - \mu_i^-$ by $r\lambda_i$ in the previous equality,
we get $f(\av{a}) = M\av{a}$.
\end{proof}

\subsection{Polyhedral Convex Sets}

A $\Q$-\emph{polyhedral convex set} of $\R^n$ (or simply a \emph{polyhedral convex set})
is a set $C \subseteq \R^n$ such that there exists a finite (potentially empty) sequence
$(\av{\alpha}_j, \#_j, \beta_j)_{1 \leq j \leq k}$ of tuples in
$\Q^n \times \{<,\leq,=,\geq,>\} \times \Q$ satisfying:
\[
C = \Bigl\{ \av{c} \in \R^n \;\Big|\; \forall j\; \sum_{i=1}^n \av{\alpha}_j[i]\,\av{c}[i]
\;\#_j\; \beta_j \Bigr\}.
\]
The \emph{Fourier--Motzkin elimination} shows that the following set $\exists_i C$ is
effectively polyhedral convex for any polyhedral convex set $C \subseteq \R^n$ and
$1 \leq i \leq n$:
\[
\exists_i C = \{(c_1,\ldots,c_{i-1},c_{i+1},\ldots,c_n) \in \R^{n-1} \mid
\exists c_i \in \R\;(c_1,\ldots,c_n) \in C\}.
\]
In particular $\FO(\R,+,\leq)$ admits a quantifier elimination algorithm and we deduce
that a set is definable in $\FO(\R,+,\leq)$ if and only if it is equal to a finite union
of polyhedral convex sets.

\begin{lemma}
\label{lem:nonempty-rational}
Non-empty polyhedral convex sets of $\R^n$ contain vectors in $\Q^n$.
\end{lemma}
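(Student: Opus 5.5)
I would prove the lemma by induction on the dimension $n$, using the Fourier--Motzkin elimination recalled just above. The idea is to eliminate the last coordinate, pick a rational point in the projection by induction, and then lift it back to a rational point of $C$.

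\textbf{Base case.} For $n=0$ the statement is trivial, since $\R^0=\Q^0=\{()\}$.

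\textbf{Inductive step.} Let $C\subseteq\R^n$ be non-empty and polyhedral convex, with $n\geq 1$.
\begin{itemize}
\item The set $\exists_n C\subseteq\R^{n-1}$ is polyhedral convex by Fourier--Motzkin, and it is non-empty because it contains the projection of any point of $C$.
\item By induction it contains a rational vector $(q_1,\ldots,q_{n-1})\in\Q^{n-1}$.
\item By the definition of $\exists_n C$, the fibre $F=\{c\in\R \mid (q_1,\ldots,q_{n-1},c)\in C\}$ is non-empty.
\end{itemize}

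It remains to find a rational point in $F$. Substitute $q_1,\ldots,q_{n-1}$ into each constraint $\sum_i\av{\alpha}_j[i]\,\av{c}[i]\;\#_j\;\beta_j$. Since all $q_i$ and all coefficients are rational, each constraint becomes one of two kinds:
\begin{itemize}
\item if $\av{\alpha}_j[n]\neq 0$, a constraint $c\;\#'_j\;\gamma_j$ with $\gamma_j\in\Q$ and $\#'_j\in\{<,\leq,=,\geq,>\}$ (the direction flips when $\av{\alpha}_j[n]<0$);
\item if $\av{\alpha}_j[n]=0$, a constraint not involving $c$ at all. It is true, because $F\neq\emptyset$.
\end{itemize}

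Hence $F$ is a non-empty interval of $\R$ whose endpoints, when they exist, are rational. The interval may be a single rational point, bounded with rational endpoints and non-empty interior, a half-line, or all of $\R$. In each case I would exhibit a rational element explicitly:
\begin{itemize}
\item if $F=\{\gamma\}$, take $\gamma$;
\item if $F$ has rational endpoints $\ell<u$, take the midpoint $(\ell+u)/2$;
\item if $F$ is bounded on one side only, by $\gamma$, take $\gamma\pm1$ on the unbounded side;
\item if $F=\R$, take $0$.
\end{itemize}
Calling this choice $q_n$, the vector $(q_1,\ldots,q_n)$ lies in $C\cap\Q^n$, which completes the induction.

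The only delicate step is the one-dimensional case analysis. There one must make sure that strict inequalities do not leave an empty open interval, and this is excluded because $F$ is non-empty.
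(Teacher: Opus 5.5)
Your proof is correct and follows essentially the same route as the paper: induction on the dimension, using Fourier--Motzkin projection together with a one-dimensional interval argument whose endpoints are rational. The paper does the two steps in the opposite order: it first takes a rational value $c_{n+1}$ in the one-dimensional projection $\exists_{1,\ldots,n}C'$ (its $n=1$ base case, via density of $\Q$), then applies the induction hypothesis to the fibre over $c_{n+1}$, whereas you apply induction to $\exists_n C$ first and then pick an explicit rational point in the one-dimensional fibre.
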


\begin{proof}
The proof is by induction over $n$.
For $n = 1$: a non-empty polyhedral convex set $C \subseteq \R$ is an interval with bounds
in $\Q \cup \{-\infty,+\infty\}$.
If $C$ contains a unique value then this value is in $\Q$.
Otherwise, $C$ contains two values $l < h$, so $]l,h[ \subseteq C$, and by density of
$\Q$ in $\R$ there exists $c \in \Q$ with $l < c < h$, hence $c \in C$.
For the induction step, given a non-empty polyhedral convex set $C' \subseteq \R^{n+1}$,
let $I = \exists_{1,\ldots,n} C'$.
Fourier--Motzkin shows $I$ is polyhedral convex and non-empty, so there exists
$c_{n+1} \in I \cap \Q$.
The set $C = \exists_{n+1}\{\av{x} \in C' \mid \av{x}[n+1] = c_{n+1}\}$ is polyhedral
convex and non-empty, so by induction hypothesis there exists
$(c_1,\ldots,c_n) \in C \cap \Q^n$.
Then $(c_1,\ldots,c_{n+1}) \in C' \cap \Q^{n+1}$.
\end{proof}

\begin{proposition}
\label{prop:dense}
$C \cap \Q^n$ is dense in $C$ for any polyhedral convex set $C$ of $\R^n$.
\end{proposition}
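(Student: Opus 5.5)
The plan is to reduce density to Lemma~\ref{lem:nonempty-rational} by intersecting $C$ with small rational boxes. Fix $\av{c} \in C$ and $\varepsilon > 0$. We want a vector $\av{q} \in C \cap \Q^n$ with $|\av{q}[i] - \av{c}[i]| < \varepsilon$ for every $i$. For each $i$, choose rationals $l_i, h_i$ with $\av{c}[i] - \varepsilon < l_i < \av{c}[i] < h_i < \av{c}[i] + \varepsilon$; these exist by density of $\Q$ in $\R$. Then consider the set
\[
C_\varepsilon = C \cap \{\av{x} \in \R^n \mid \forall i\; l_i < \av{x}[i] < h_i\}.
\]

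The key steps are as follows. First, $C_\varepsilon$ is a polyhedral convex set. It is defined by the finite list of rational constraints for $C$ together with the $2n$ strict constraints $\av{x}[i] > l_i$ and $\av{x}[i] < h_i$. All of these have rational coefficients and right-hand sides, so they are of the form allowed in the definition. Second, $C_\varepsilon$ is non-empty, since $\av{c}$ satisfies all of these constraints. Third, Lemma~\ref{lem:nonempty-rational} then provides some $\av{q} \in C_\varepsilon \cap \Q^n$. By construction $\av{q} \in C$ and $\av{q}$ lies within $\varepsilon$ of $\av{c}$ in every coordinate. Since $\av{c}$ and $\varepsilon$ were arbitrary, $C \cap \Q^n$ is dense in $C$.

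There is no real obstacle. The only point needing care is that the box must use rational bounds, so that the intersection stays a $\Q$-polyhedral convex set and Lemma~\ref{lem:nonempty-rational} applies. This is why we first pick rational $l_i, h_i$ strictly inside $]\av{c}[i]-\varepsilon, \av{c}[i]+\varepsilon[$, rather than using the possibly irrational bounds $\av{c}[i]\pm\varepsilon$ directly. Using strict inequalities keeps $\av{c}$ itself inside the box, which is what guarantees non-emptiness.
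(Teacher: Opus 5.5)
Your proof is correct and is essentially the paper's argument: intersect $C$ with a small box with rational bounds around $\av{c}$, observe that this is a non-empty polyhedral convex set, and apply Lemma~\ref{lem:nonempty-rational}. The paper uses a sequence of closed boxes $\av{l}_k[i] \leq \av{x}[i] \leq \av{h}_k[i]$ shrinking to $\av{c}$ rather than an $\varepsilon$ and open boxes, which makes no difference; note also that non-strict bounds would contain $\av{c}$ just as well, so strictness is not what ensures non-emptiness.
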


\begin{proof}
Let $\av{c} \in C$.
As $\Q$ is dense in $\R$, there exist sequences $(\av{l}_k)_{k \geq 0}$ and
$(\av{h}_k)_{k \geq 0}$ in $\Q^n$ converging toward $\av{c}$ with
$\av{l}_k[i] \leq \av{c}[i] \leq \av{h}_k[i]$ for all $i$.
The set $C_k = \{\av{x} \in C \mid \bigwedge_i \av{l}_k[i] \leq \av{x}[i] \leq
\av{h}_k[i]\}$ is non-empty since $\av{c} \in C_k$.
By Lemma~\ref{lem:nonempty-rational}, there exists $\av{c}_k \in C_k \cap \Q^n$,
and the limit of $(\av{c}_k)_{k \geq 0}$ is $\av{c}$.
\end{proof}

\begin{theorem}
\label{thm:real}
Definable functions in $\FO(\R,+,\leq)$ are exactly the $\FO(\R,+,\leq)$-piecewise
linear functions.
\end{theorem}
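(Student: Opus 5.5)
We prove the two inclusions separately. The easy direction: suppose $f$ is $\FO(\R,+,\leq)$-piecewise linear, with a definable partition $\Pi=\{X_1,\ldots,X_k\}$ of $\dom(f)$ and $f|_{X_j}(\av{a})=M_j\av{a}+\av{v}_j$ with rational data. Then $G_f$ is defined by
\[
\bigvee_{j}\Bigl(\varphi_{X_j}(\av{x})\wedge \av{y}=M_j\av{x}+\av{v}_j\Bigr).
\]
After clearing denominators, each equation $\av{y}=M_j\av{x}+\av{v}_j$ becomes an equality between sums of variables and multiples of $1$, which is expressible with $+$ and constants definable from $\leq$ and $+$. Hence $f$ is definable.

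For the converse, assume $G_f$ is definable. By quantifier elimination, $G_f=P_1\cup\cdots\cup P_k$ with each $P_i$ a polyhedral convex subset of $\R^{n}\times\R^q$. By Lemma~\ref{lem:decomp}, each $P_i$ is the graph of $f|_{D_i}$, where $D_i$ is the projection of $P_i$. Each $D_i$ is polyhedral convex by Fourier--Motzkin, hence definable. Making the cover disjoint via the Boolean combinations $D_i\setminus\bigcup_{j<i}D_j$ yields a definable partition. So it suffices to show that a function whose graph is a single nonempty polyhedral convex set $P$ is $\Q$-linear on its domain.

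The key step, and the main obstacle, is this linearity claim. My plan is to reduce it to Proposition~\ref{prop:stable}. By Lemma~\ref{lem:nonempty-rational}, pick a rational point $(\av{a}_0,\av{b}_0)\in P\cap\Q^{n+q}$, and translate: $P'=P-(\av{a}_0,\av{b}_0)$ is still polyhedral convex, is the graph of a function $g$, and contains $0$. Convexity of $P'$ gives that for $(\av{a},\av{b})\in P'$ and $t\in[0,1]$ we have $g(t\av{a})=t g(\av{a})$, and that $g$ is affine along segments inside $\dom(g)$.

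To reach stability, I would consider the cone $K=\{\lambda p \mid \lambda\ge 0,\ p\in P'\}$. This is a convex set, and it is still a function graph. Indeed, if $(\av{a},\av{b}_1)=\lambda_1p_1$ and $(\av{a},\av{b}_2)=\lambda_2p_2$, scaling both down by a common small factor puts them into $P'$, because $0\in P'$ and $P'$ is convex, and $P'$ is a graph, so $\av{b}_1=\av{b}_2$. A convex cone is closed under addition, so the function $h$ with graph $K$ is stable. Restricting to rational points, $K\cap\Q^{n+q}$ is the graph of a stable function with rational graph, and Proposition~\ref{prop:stable} gives a matrix $M\in\Q^{q\times n}$ with $h(\av{a})=M\av{a}$ on rational points of $\dom(h)$.

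It remains to pass from rational to all real points of $P'$. Proposition~\ref{prop:dense} shows that $P'\cap\Q^{n+q}$ is dense in $P'$. Since $M$ is continuous and the relation $\av{b}=M\av{a}$ defines a closed set, every point of $P'$ satisfies $\av{b}=M\av{a}$. Translating back gives $f(\av{a})=M\av{a}+(\av{b}_0-M\av{a}_0)$ with rational coefficients, so $f|_{D_i}$ is $\Q$-linear, which completes the proof.
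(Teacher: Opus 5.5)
Your proof is correct and follows the paper's route: reduce to a single polyhedral piece, pick a rational base point, make a stable function out of a cone over the translated graph, apply Proposition~\ref{prop:stable}, and extend from rational points by Proposition~\ref{prop:dense}. The differences are minor. Your cone $\{\lambda p \mid \lambda \geq 0,\ p \in P'\}$ drops the paper's homogenizing coordinate $z$ in $h(\av{x},z)=z(f(\av{d}_0+\av{x}/z)-f(\av{d}_0))$. Restricting to rational points of the graph lets you skip the paper's fiber argument that $f$ maps $\dom(f)\cap\Q^n$ into $\Q^q$. You also handle the cover-to-partition refinement explicitly, which the paper leaves implicit. One caveat on the easy direction: in $(\R,+,\leq)$ alone the constant $1$ is not definable, since every scaling $x\mapsto\lambda x$ with $\lambda>0$ is an automorphism. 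So your step ``constants definable from $\leq$ and $+$'' is not literally true; your argument, like the paper's identification of definable sets with rational polyhedral sets, tacitly assumes rational constants are in the language.
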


\begin{proof}
Let $f : \R^n \to \R^q$ with $G_f$ definable in $\FO(\R,+,\leq)$.
We deduce that $G_f$ can be decomposed into a finite union of polyhedral convex sets.
From Lemma~\ref{lem:decomp}, we can assume that $G_f$ is a polyhedral convex set.
As $\dom(f) = \exists_{n+1,\ldots,n+q} G_f$, Fourier--Motzkin elimination shows that
$\dom(f)$ is polyhedral convex, hence definable in $\FO(\R,+,\leq)$.
If $\dom(f)$ is empty the proof is immediate.
Otherwise, from Lemma~\ref{lem:nonempty-rational} there exists
$\av{d}_0 \in \dom(f) \cap \Q^n$,
and for any $\av{d} \in \dom(f) \cap \Q^n$, since $(\av{d}, f(\av{d}))$ is the unique
vector of $G_f \cap (\{\av{d}\} \times \R^q)$, Lemma~\ref{lem:nonempty-rational} gives
$f(\av{d}) \in \Q^q$.

Let $D = \{(\av{x},z) \in \Q^n \times (\Qp \setminus \{0\}) \mid \av{d}_0 + \frac{\av{x}}{z} \in \dom(f)\}$
and $h : \R^n \times \R \to \R^q$ defined over $D$ by
$h(\av{x},z) = z(f(\av{d}_0 + \frac{\av{x}}{z}) - f(\av{d}_0))$.
The convexity of $G_f$ shows that $h$ is stable by addition.
Since $D \subseteq \Q^n \times \Qp$ and $f$ maps $\dom(f) \cap \Q^n$ into $\Q^q$
(established above), we have $G_h \subseteq \Q^{n+1} \times \Q^q$.
Proposition~\ref{prop:stable} then gives a matrix
$N \in \Q^{q \times (n+1)}$ such that $h(\av{x},z) = N(\av{x},z)$ for any
$(\av{x},z) \in \dom(h)$.

For any $\av{d} \in \dom(f) \cap \Q^n$, let $\av{x} = \av{d} - \av{d}_0$ and $z = 1$.
We get $f(\av{d}) = f(\av{d}_0) + N(\av{d} - \av{d}_0, 1)$, so there exist
$M \in \Q^{q \times n}$ and $\av{v} \in \Q^q$ with
$f(\av{d}) = M\av{d} + \av{v}$ for any $\av{d} \in \dom(f) \cap \Q^n$.

Finally, Proposition~\ref{prop:dense} shows that $G_f \cap (\Q^n \times \Q^q)$ is
dense in $G_f$.
There is a sequence $(\av{a}_i, f(\av{a}_i))_{i \geq 0}$ in $G_f \cap (\Q^n \times \Q^q)$
converging to $(\av{c}, f(\av{c}))$.
Since $\av{x} \mapsto M\av{x} + \av{v}$ is linear, it is continuous.
Therefore $f(\av{a}_i) = M\av{a}_i + \av{v} \to M\av{c} + \av{v}$, and thus $f(\av{c}) = M\av{c} + \av{v}$.
\end{proof}

\section{The Integer Additive Theory and the Ginsburg--Spanier Theorem}
\label{sec:integer}

\subsection{The Ginsburg--Spanier Theorem}

A set $L \subseteq \Z^n$ is said \emph{linear}~\cite{GS66} if there exist a vector
$\av{b} \in \Z^n$ and a finite set $P \subseteq \Z^n$ such that $L = \av{b} + P^*$
where $P^*$ is the \emph{submonoid} of $(\Z^n,+)$ generated by $P$.
Recall~\cite{GS66} that sets definable in $\FO(\Z,+,\leq)$ are exactly the finite unions
of linear sets (also called \emph{semi-linear sets}).

\begin{theorem}[\cite{GS66}]
\label{thm:GS}
The sets definable in $\FO(\Z,+,\leq)$ are exactly the semi-linear sets.
\end{theorem}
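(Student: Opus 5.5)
We prove the two inclusions separately.

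\textbf{Semi-linear sets are definable.} This direction is routine. A linear set $L = \av{b} + P^*$ with $P = \{\av{p}_1,\ldots,\av{p}_m\}$ is defined by the formula
\[
\exists \lambda_1 \ldots \exists \lambda_m \; \bigwedge_j \lambda_j \geq 0 \wedge \av{x} = \av{b} + \textstyle\sum_j \lambda_j \av{p}_j .
\]
Each $\lambda_j\av{p}_j[i]$ is a multiple by an integer constant, hence expressible with $+$ and $\leq$. Finite unions become disjunctions.

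\textbf{Definable sets are semi-linear.} The plan is an induction on the structure of formulas, showing that the class of semi-linear sets contains the atomic sets and is closed under the Boolean operations and projection.

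1. \emph{Atomic formulas.} Every atomic formula reduces to $\av{\alpha}\cdot\av{x} \leq \beta$ or $\av{\alpha}\cdot\av{x} = \beta$. Its solution set is a finitely generated monoid translate: by a Hilbert-basis / Dickson argument, the minimal solutions together with the minimal nonzero solutions of the homogeneous system, after splitting $\Z^n$ into its $2^n$ orthants, give finitely many linear sets.

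2. \emph{Union and projection.} Closure under union is immediate. Closure under projection holds because the image of a linear set under a linear map is linear.

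3. \emph{Intersection.} Closure under intersection reduces to solving $\av{b}+\sum\lambda_j\av{p}_j=\av{b}'+\sum\mu_k\av{p}'_k$ over $\N$. This is again a linear Diophantine system, so its solution set is semi-linear by step 1, and the intersection is its image under a linear map, hence semi-linear by step 2.

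4. \emph{Complement.} This is the main obstacle. I would follow Ginsburg--Spanier, or Eilenberg--Schützenberger, and first show that every semi-linear set is a finite union of \emph{unambiguous} linear sets, meaning sets whose periods are linearly independent. Then I would show that the complement of such a set is semi-linear by writing it as a finite union of translates of sublattice cones intersected with half-spaces.

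An alternative that avoids the hard combinatorics of step 4 is to cite Presburger's quantifier elimination, with congruence predicates added. Every definable set is then a Boolean combination of half-spaces and congruence classes. Each such set is a finite union of polyhedra intersected with lattice cosets, and each of these pieces is semi-linear by the Hilbert-basis argument of step 1 applied to $\{\av{x} = \av{c} + N\av{z} \mid A\av{x} \leq \av{b}\}$. I would take this route, since the paper cites the result as \cite{GS66} and only needs the statement.
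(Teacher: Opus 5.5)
The paper gives no proof of this statement. It is imported from \cite{GS66} and used only as a black box, so your outline is a reconstruction rather than a variant of anything in the text. Up to one fixable slip, noted below, it is sound.

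Your first route (Boolean closure plus projection) is the classical combinatorial proof. It is also the route consistent with the paper's advertised ``no quantifier elimination'' framing. Essentially all of its difficulty sits in the decomposition lemma you only name: writing a linear set as a finite union of linear sets with linearly independent periods, via an integer Carath\'eodory exchange argument. Once that is done, $\av{b}+P^*=(\av{b}+\Z P)\cap(\av{b}+\mathrm{cone}(P))$. Its complement is a finite union of cosets of $\Z P$ and of integer points of open rational half-spaces, all semi-linear by your step~1.

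The route you actually choose is shorter and easy to check. After Presburger's elimination, each DNF disjunct has the form $\{\av{c}+N\av{z} \mid \av{z}\in\Z^n,\ A(\av{c}+N\av{z})\leq\av{b}\}$. This is because a solvable system of congruences defines a coset of a full-rank lattice, which contains $m\Z^n$ for $m$ the lcm of the moduli. So each disjunct is an affine image of the integer points of a rational polyhedron. The price is that the integer side of the paper then rests on Presburger's quantifier elimination, which the paper explicitly claims to avoid.

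The slip affects both routes. The Dickson argument as you state it (minimal solutions plus minimal nonzero homogeneous solutions) is valid for \emph{equations} only. For inequalities you must first add slack variables and project afterwards. For instance, $\{x\in\N \mid x\leq 5\}$ has minimal solution $0$ and only the trivial homogeneous solution, yet it has six elements.
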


This characterization shows that the Presburger sets are exactly the \emph{rational sets}
of $(\Z^n,+)$.

\subsection{Presburger Functions Are Piecewise Linear}

\begin{theorem}
\label{thm:integer}
Definable functions in $\FO(\Z,+,\leq)$ are exactly the $\FO(\Z,+,\leq)$-piecewise linear
functions.
\end{theorem}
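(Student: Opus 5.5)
We prove the two inclusions separately; the converse direction is routine and the forward direction mirrors the proof of Theorem~\ref{thm:real}, with Theorem~\ref{thm:GS} in place of Fourier--Motzkin.

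\emph{Converse.} Suppose $f$ is $\FO(\Z,+,\leq)$-piecewise linear, with a definable partition $\Pi=\{X_1,\ldots,X_k\}$ of $\dom(f)$ and $f(\av{x}) = M_i\av{x}+\av{v}_i$ on $X_i$, where $M_i$ and $\av{v}_i$ are rational. Let $r\geq 1$ be a common denominator of all these coefficients. Then $\av{y}=M_i\av{x}+\av{v}_i$ is equivalent to $r\av{y} = (rM_i)\av{x}+r\av{v}_i$, which has integer coefficients. This is a Presburger formula, because multiplication by an integer constant is iterated addition. The graph $G_f$ is therefore defined by $\bigvee_i \bigl(\av{x}\in X_i \wedge r\av{y}=(rM_i)\av{x}+r\av{v}_i\bigr)$.

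\emph{Forward direction.} Let $G_f\subseteq \Z^{n}\times\Z^q$ be definable.

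1. By Theorem~\ref{thm:GS}, $G_f$ is a finite union of linear sets $L_j=\av{b}_j+P_j^*$.

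2. By Lemma~\ref{lem:decomp}, each $L_j$ is the graph of $f|_{D_j}$, where $D_j$ is the projection of $L_j$. Since $L_j$ is definable, $D_j$ is definable by existential projection. Hence we may assume $G_f=\av{b}+P^*$ is linear, with $\av{b}=(\av{b}_x,\av{b}_y)$.

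3. Define $h$ by $h(\av{x})=f(\av{b}_x+\av{x})-\av{b}_y$ on $\{\av{x}\mid \av{b}_x+\av{x}\in\dom(f)\}$. Its graph is $G_h=G_f-\av{b}=P^*$, which is a submonoid. So if $(\av{x}_1,\av{y}_1),(\av{x}_2,\av{y}_2)\in G_h$, then their sum is in $G_h$. This means $\av{x}_1+\av{x}_2\in\dom(h)$ and $h(\av{x}_1+\av{x}_2)=h(\av{x}_1)+h(\av{x}_2)$, so $h$ is stable.

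4. Since $G_h\subseteq\Z^n\times\Z^q\subseteq\Q^n\times\Q^q$, Proposition~\ref{prop:stable} gives a matrix $M\in\Q^{q\times n}$ with $h(\av{x})=M\av{x}$ on $\dom(h)$.

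5. It follows that $f(\av{a})=M\av{a}+(\av{b}_y-M\av{b}_x)$ on $\dom(f)$, which is linear with rational coefficients.

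\emph{From a cover to a partition.} The decomposition above gives definable sets $D_1,\ldots,D_k$ that cover $\dom(f)$ but may overlap. We replace them by the disjoint sets $X_j=D_j\setminus(D_1\cup\cdots\cup D_{j-1})$, which are still definable, and discard the empty ones. The restriction of $f$ to $X_j\subseteq D_j$ remains linear, so these $X_j$ form the required definable partition.

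\emph{Main obstacle.} The only point needing care is step 3: the stability of $h$ must come from the monoid structure of $P^*$, not from convexity as in the real case. A related subtlety is that $\dom(h)$ is not closed under negation or rational scaling. This is harmless because Proposition~\ref{prop:stable} only needs additive stability together with the spanning argument over $\Q$, and that argument already clears denominators with $r$ and splits coefficients into $\mu_i^+-\mu_i^-$.
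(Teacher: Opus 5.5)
Your proof is correct and follows the paper's argument essentially step for step. You decompose $G_f$ via Ginsburg--Spanier, reduce to a single linear piece via Lemma~\ref{lem:decomp}, translate to get $G_h = P^*$ so that $h$ is stable, and apply Proposition~\ref{prop:stable}. You also correctly make explicit two points the paper leaves implicit: the converse direction, by clearing denominators to get an integer-coefficient Presburger formula, and the disjointification $X_j = D_j \setminus (D_1\cup\cdots\cup D_{j-1})$, which turns the cover into a definable partition.
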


\begin{proof}
Let $f : \Z^n \to \Z^q$ such that $G_f$ is definable in $\FO(\Z,+,\leq)$.
Since $G_f$ is definable in $\FO(\Z,+,\leq)$, Theorem~\ref{thm:GS} gives a decomposition of $G_f$ into a finite union of linear sets.
From Lemma~\ref{lem:decomp}, we can assume that $G_f$ is a linear set.
Since $\dom(f) = \exists_{n+1,\ldots,n+q} G_f$ we deduce that $\dom(f)$ is definable in
$\FO(\Z,+,\leq)$.
As $G_f$ is a linear set, there exists $(\av{x}_0,\av{y}_0) \in \Z^n \times \Z^q$ and a
finite set $P \subseteq \Z^n \times \Z^q$ such that $G_f = (\av{x}_0,\av{y}_0) + P^*$.
We consider the function $h$ defined over
$\dom(h) = \dom(f) - \av{x}_0$ by
$h(\av{d}) = f(\av{x}_0 + \av{d}) - \av{y}_0$ for any $\av{d} \in \dom(h)$.
By definition of $h$, we have $G_h = G_f - (\av{x}_0,\av{y}_0) = P^*$,
so $h$ is stable by addition and $G_h \subseteq \Q^n \times \Q^q$.
Proposition~\ref{prop:stable} proves that there exists $M \in \Q^{q \times n}$ such that
$h(\av{d}) = M\av{d}$ for any $\av{d} \in \dom(h)$.
Now let $\av{x} \in \dom(f)$ and $\av{d} = \av{x} - \av{x}_0$.
From $h(\av{d}) = M\av{d}$ and $h(\av{d}) = f(\av{x}_0 + \av{d}) - \av{y}_0$ we deduce
$f(\av{x}) = M\av{x} + \av{v}$ where $\av{v} = \av{y}_0 - M\av{x}_0$.
Therefore $f$ is linear.
\end{proof}

\begin{remark}
In contrast to the operational characterizations of Presburger functions via counter
automata~\cite{MR67,Che76,GI79} and the composition-closure results of~\cite{IL81},
Theorem~\ref{thm:integer} provides a direct algebraic normal form
using only linear algebra and the Ginsburg--Spanier theorem.
\end{remark}

\begin{remark}
The piecewise linear characterization is implicitly present in the literature:
applying~\cite{GS66} to the graph $G_f \subseteq \Z^n \times \Z^q$ yields a semilinear
decomposition from which linearity on each piece follows by the above argument.
However, this consequence was never stated explicitly as a theorem with a direct proof.
\end{remark}

\section{Extension of the Ginsburg--Spanier Theorem: Semi-polinear Sets and Functions}
\label{sec:mixed}

\subsection{Mixed-linear Sets}

We fix a countable set $\mathcal{X}$ of variables.
Valuations are totally-defined functions $v : \mathcal{X} \to \R$ and we write
$v \models \phi$ if $v$ satisfies a formula $\phi$.
When $\phi$ is a formula of the mixed linear arithmetic, the denoted set $S \subseteq \R^n$
is said to be \emph{mixed-linear}.

In~\cite{Wei99}, Weispfenning showed that the mixed-linear arithmetic is decidable
via a quantifier elimination algorithm for an effectively equivalent extended logic.
The description of mixed-linear \emph{sets} in~\cite{Wei99} is incomplete;
Theorem~\ref{thm:mixed-sets} below provides the correct characterization via
semi-polinear sets (see Remark~\ref{rem:wei99}).

\subsection{Definition of Semi-polinear Sets}

Let us recall that a \emph{finitely generated monoid} is a subset $M \subseteq \Z^n$ such
that there exists a finite set $P \subseteq \Z^n$ satisfying $M = P^*$ where
$P^* := \{\av{0}\} \cup \{p_1 + \cdots + p_k \mid k \geq 1,\, p_i \in P\}$.

\begin{definition}
\label{def:polinear}
A \emph{polinear set} is a subset $L \subseteq \R^n$ of the form
\[
L := C + b + M \;=\; \{c + z \mid c \in C,\; z \in b + M\},
\]
where $b \in \Z^n$, $M \subseteq \Z^n$ is a finitely generated monoid, and
$C \subseteq [0,1)^n$ is a non-empty polyhedral convex set.
A \emph{semi-polinear set} is a finite union of polinear sets.
\end{definition}

\begin{remark}
The term \emph{polinear} combines \emph{polyhedral} (for the convex part $C$)
and \emph{linear} (for the integer part $b + M$).
\end{remark}

Since every vector in $\R^n$ decomposes \emph{uniquely} as $c + z$ with
$c \in [0,1)^n$ and $z \in \Z^n$, each element of $L$ has fractional part in $C$
and integer part in $b + M$.
When $C = \{\av{0}\}$, the polinear set reduces to the linear set $b + M \subseteq \Z^n$.
The linear sets of Definition~\ref{def:polinear} coincide with those of~\cite{GS66},
and semi-linear sets are a special case of semi-polinear sets.

\subsection{Closure Properties}

We denote by $e_i \in \Z^n$ the $i$-th standard basis vector,
and by $\exists_i S = \{\av{s} \in \R^{n-1} \mid \exists s_i\in\R:\,
(s_1,\ldots,s_n) \in S\}$ the \emph{$i$-th component removal}.

\begin{proposition}
\label{prop:closure}
The class of semi-polinear sets is stable by finite union, finite intersection,
complement, and component removal.
\end{proposition}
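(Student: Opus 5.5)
Union is immediate from the definition, so the work lies in intersection, complement, and component removal. For these I will first prove a normal form: a set $S \subseteq \R^n$ is semi-polinear if and only if
\[
S = \bigcup_{j=1}^m \{c+z \mid c \in C_j,\; z \in Z_j\},
\]
where each $C_j \subseteq [0,1)^n$ is a finite union of polyhedral convex sets and each $Z_j \subseteq \Z^n$ is semi-linear. The forward direction is trivial. For the converse, write $C_j$ as a finite union of polyhedral convex sets intersected with $[0,1)^n$, which is again polyhedral convex, drop the empty pieces, and write $Z_j$ as a finite union of linear sets $b+M$. The product $\{c+z\}$ then distributes over both unions. Throughout I use that $x \mapsto (x-[x],[x])$ is a bijection $\R^n \to [0,1)^n \times \Z^n$. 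In this language a semi-polinear set is exactly a finite union of ``rectangles'' $C \times Z$ in the product $[0,1)^n \times \Z^n$, with $C$ real-definable and $Z$ Presburger.

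\textbf{Intersection and complement.} For intersection, bijectivity gives
\[
(C\times Z)\cap(C'\times Z') = (C\cap C')\times(Z\cap Z').
\]
The real factor is a finite union of polyhedral convex sets, hence $\FO(\R,+,\leq)$-definable. The integer factor is semi-linear by Theorem~\ref{thm:GS}, since Presburger sets are closed under Boolean operations. Finite unions of rectangles are therefore closed under intersection. For complement, the complement of $C\times Z$ inside $[0,1)^n\times\Z^n$ is
\[
\bigl(([0,1)^n\setminus C)\times\Z^n\bigr)\cup\bigl([0,1)^n\times(\Z^n\setminus Z)\bigr),
\]
again a union of rectangles of the required kind. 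Complement of a finite union is a finite intersection of such complements, so it is handled by the intersection case.

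\textbf{Component removal.} This is the main obstacle. Projecting out coordinate $i$ does not simply project each factor, because $s_i = c_i + z_i$ is quantified as a single real and the fractional and integer parts are coupled only through that coordinate. The key observation is that, under the bijection, $\exists s_i \in \R$ amounts to quantifying independently over $c_i \in [0,1)$ and $z_i \in \Z$. For one rectangle,
\[
\exists_i\bigl(\{c+z \mid c\in C,\ z\in Z\}\bigr) = \{c'+z' \mid c'\in \exists_i C,\ z'\in \exists_i Z\},
\]
where $\exists_i C$ projects a subset of $[0,1)^n$ to a subset of $[0,1)^{n-1}$, and $\exists_i Z$ is the integer projection. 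The check is direct: $(c'+z')$ lies in the projection iff there are $c_i, z_i$ with $(c',c_i)\in C$ (suitably reinserted) and $(z',z_i)\in Z$. This is exactly the condition $c'\in\exists_i C$ and $z'\in\exists_i Z$, since the two witnesses are independent.

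It remains to see that the factors stay in the right classes. $\exists_i C$ is a finite union of polyhedral convex sets by Fourier--Motzkin, and it stays inside $[0,1)^{n-1}$. $\exists_i Z$ is Presburger-definable, hence semi-linear by Theorem~\ref{thm:GS}. Since projection commutes with union, this finishes the proof. The only care needed is in the independence argument, and in reconfirming that the normal form's rectangle decomposition is valid, i.e.\ that fractional and integer parts are determined uniquely.
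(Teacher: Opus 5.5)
Your proof is correct and follows the paper's approach. Both use the unique split $s=c+z$ to act factorwise on the polyhedral part and the Presburger part: intersection is factorwise, the complement is a union of two ``rectangles'', and component removal projects each factor separately, using the independent witnesses $c_i$ and $z_i$. Working with rectangles $C\times Z$, where $C$ is a finite union of polyhedra and $Z$ is semilinear, rather than with single polinear sets $C+b+M$, is only a cosmetic repackaging of the same argument.
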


\begin{proof}
The unique decomposition $s = c + z$ with $c \in [0,1)^n$ and $z \in \Z^n$ is the key:
every set operation acts independently on the polyhedral part (in $[0,1)^n$)
and on the semilinear part (in $\Z^n$).
It suffices to treat a single polinear set $L = C + b + M$
(the general case follows by distributing over finite unions).
Let also $L' = C' + b' + M'$ be a second polinear set.

\smallskip
\noindent\textbf{Intersection.}
Since the fractional/integer split is unique:
\[
L \cap L' = (C \cap C') + \bigl((b + M) \cap (b' + M')\bigr).
\]
$C \cap C'$ is polyhedral convex (intersection of two polyhedral convex sets in
$[0,1)^n$); $(b+M) \cap (b'+M')$ is semilinear by Theorem~\ref{thm:GS}.
If either part is empty, so is $L \cap L'$.

\smallskip
\noindent\textbf{Complement.}
For $s = c + z \in \R^n$:
$s \notin C + b + M$ iff $c \notin C$ or $z \notin b + M$.
Hence:
\[
\R^n \setminus L
  \;=\; \bigl(([0,1)^n \setminus C) + \Z^n\bigr)
        \;\cup\;
        \bigl(C + (\Z^n \setminus (b + M))\bigr).
\]
Both terms are semi-polinear:
$[0,1)^n \setminus C$ is a finite union of polyhedral convex sets in $[0,1)^n$
(Boolean closure of polyhedral sets), and $\Z^n = \{\pm e_1,\ldots,\pm e_n\}^*$
is finitely generated, so the first term is semi-polinear;
$\Z^n \setminus (b+M)$ is semilinear by Theorem~\ref{thm:GS},
so the second term is semi-polinear.

\smallskip
\noindent\textbf{Component removal.}
Let a single polinear set $L = C + b + M$, then:
\[
\exists_i L = (\exists_i C) + (\exists_i b) + (\exists_i M).
\]
$\exists_i C$ is polyhedral convex in $[0,1)^{n-1}$ by Fourier--Motzkin elimination;
$\exists_i M$ is a finitely generated monoid (image of $M$ under coordinate projection).
So $\exists_i L$ is semi-polinear.
\end{proof}

\subsection{Mixed-linear Sets Are Semi-polinear}

We introduce the addition relation
$G_+ := \{(\alpha,\beta,\gamma) \in \R^3 \mid \alpha + \beta = \gamma\}$
and the first-order logic $\FO(\R,\Z,\R_{\geq 0},G_+)$, which is effectively equivalent
to the mixed-linear arithmetic.

\begin{theorem}
\label{thm:mixed-sets}
The class of semi-polinear sets and the class of mixed-linear sets are equal.
\end{theorem}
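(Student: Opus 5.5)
We prove the two inclusions separately. Both rest on the unique decomposition $s=c+z$ with $c\in[0,1)^n$ and $z\in\Z^n$, and on Proposition~\ref{prop:closure}.

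\emph{Semi-polinear $\subseteq$ mixed-linear.} It suffices to write a defining formula for a single polinear set $L=C+b+M$.
\begin{itemize}
\item The polyhedral convex set $C$ is a conjunction of rational linear constraints. Multiplying through by denominators makes these constraints expressible with $+$ and $\leq$.
\item The linear set $b+M$ with $M=P^*$ is Presburger-definable by Theorem~\ref{thm:GS}. The formula is relativized to variables constrained to lie in $\Z$.
\end{itemize}
Then $L$ is defined by
\[
\exists \av{c}\,\exists \av{z}\;\bigl(\av{s}=\av{c}+\av{z}\wedge \av{c}\in C\wedge \av{z}\in\Z^n\wedge \av{z}\in b+M\bigr).
\]
Finite unions are disjunctions.

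\emph{Mixed-linear $\subseteq$ semi-polinear.} We argue by structural induction on formulas of the equivalent logic $\FO(\R,\Z,\R_{\geq0},G_+)$. Proposition~\ref{prop:closure} already gives closure under $\vee$, $\wedge$, $\neg$ and $\exists$ (component removal). Two further points are needed.

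\begin{itemize}
\item \emph{Cylindrification.} If $S\subseteq\R^{n}$ is semi-polinear, then so is $S\times\R$ and any coordinate permutation of $S$. For a polinear $C+b+M$, we have $(C+b+M)\times\R=(C\times[0,1))+(b,0)+(M\cup\{\pm e_{n+1}\})^*$, and coordinate permutations preserve every ingredient. This handles atoms whose free variables are a subset of the ambient ones.
\item \emph{Atomic relations.} We must show that each of the following is semi-polinear:
\begin{itemize}
\item $\Z=\{0\}+0+\{\pm1\}^*$;
\item $\R_{\geq0}=[0,1)+0+\{1\}^*$;
\item equality $\{(x,y)\mid x=y\}$;
\item the graph $G_+$.
\end{itemize}
\end{itemize}

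The main obstacle is $G_+$. Its points mix integer and fractional parts through a carry, so it is not directly of the form $C+b+M$. We handle it by splitting on the carry. Write $\alpha=c_1+z_1$ and $\beta=c_2+z_2$. Then
\[
G_+ = L_0\cup L_1,
\]
where
\begin{itemize}
\item $L_0=C_0+\{(z_1,z_2,z_1+z_2)\}$ with $C_0=\{(c_1,c_2,c_3)\in[0,1)^3\mid c_1+c_2=c_3\}$ (no carry);
\item $L_1=C_1+(0,0,1)+\{(z_1,z_2,z_1+z_2)\}$ with $C_1=\{(c_1,c_2,c_3)\in[0,1)^3\mid c_1+c_2=c_3+1\}$ (carry).
\end{itemize}
Here the monoid $\{(z_1,z_2,z_1+z_2)\}$ is generated by $\pm(1,0,1)$ and $\pm(0,1,1)$. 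The sets $C_0,C_1$ are non-empty and polyhedral convex in $[0,1)^3$.

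We still need to check that the sum of fractional and integer parts recovers exactly $G_+$. Since $c_3$ is the fractional part of $\alpha+\beta$ in each case, the decomposition is unique, so this holds.

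Equality is handled the same way, with a single piece $\{(c,c)\mid c\in[0,1)\}+\{\pm(1,1)\}^*$.

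Combining the atomic cases with the closure properties completes the induction.
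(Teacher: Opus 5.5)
Your proof is correct and follows essentially the same route as the paper: structural induction over $\FO(\R,\Z,\R_{\geq 0},G_+)$, with Proposition~\ref{prop:closure} handling connectives and quantifiers, and the carry split $\delta\in\{0,1\}$ as the key atomic case for $G_+$. The differences are only presentational: you treat $G_+\subseteq\R^3$ on its own and lift atoms by an explicit cylindrification and permutation step (with equality as an extra atom, which also handles repeated variables such as $G_+(x,x,y)$), whereas the paper writes each atom directly in the ambient $\R^n$ using the monoid $\{z\in\Z^n\mid z_i+z_j=z_k\}$; you also spell out the easy inclusion that the paper calls clear.
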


\begin{proof}
It is clear that every semi-polinear set is mixed-linear.
For the converse, let $\mathcal{F}$ be the set of formulas $\phi$ of
$\FO(\R,\Z,\R_{\geq 0},G_+)$ such that, for any tuple of free variables
$(x_1,\ldots,x_n)$ of $\phi$, the set $\{(s_1,\ldots,s_n) \in \R^n \mid \phi[x_i \mapsto s_i]\}$
is semi-polinear.
We prove that $\mathcal{F}$ contains all formulas by structural induction.

The key case is the predicate $G_+(x_i,x_j,x_k)$.
The corresponding set $S_+$ is semi-polinear via the decomposition
\[
S_+ = \bigcup_{\delta \in \{0,1\}}
\bigl\{c \in [0,1)^n \;\big|\; c_i + c_j = c_k + \delta\bigr\} + \delta e_k + M,
\]
where $M := \{z \in \Z^n \mid z_i + z_j = z_k\}$ is a finitely generated monoid
(it is a free abelian group of rank $n-1$, generated as a monoid by $n-1$ basis
vectors and their negatives).
Indeed, decomposing $s \in \R^n$ as $s = c + z$ with $c \in [0,1)^n$ and $z \in \Z^n$,
$s \in S_+$ iff $(c_i + z_i) + (c_j + z_j) = (c_k + z_k)$,
i.e.\ $z_k - z_i - z_j = c_i + c_j - c_k$.
Since the left-hand side is an integer and the right-hand side belongs to $(-1,2)$,
there exists $\delta \in \{0,1\}$ such that $z_k - z_i - z_j = \delta$ and
$c_i + c_j - c_k = \delta$.

The remaining base cases are:
$\{s \mid s_i \in \Z\} = \{c \in [0,1)^n \mid c_i = 0\} + \Z^n$ (polinear, since
$\Z^n$ is generated by $\{\pm e_1,\ldots,\pm e_n\}$); and
$\{s \mid s_i \geq 0\} = [0,1)^n + \{z \in \Z^n \mid z_i \geq 0\}$ (polinear, since
$\{z \in \Z^n \mid z_i \geq 0\}$ is generated by $\{e_i\} \cup \{\pm e_j \mid j \neq i\}$).
The stability properties of Proposition~\ref{prop:closure} then handle the Boolean
connectives ($\neg$, $\wedge$, $\vee$) and existential quantification ($\exists r$)
by induction.
\end{proof}

\begin{remark}
\label{rem:wei99}
Weispfenning~\cite{Wei99} claims that mixed-linear sets of $\R$ are finite unions of sets
of the form $J + p\Z$ where $J$ is an interval with rational bounds and $p \in \Z$.
This is incorrect: $\N$ is a mixed-linear set but is not a finite union of sets of this
form (any $J + p\Z$ with $p \neq 0$ contains arbitrarily negative integers, and $p = 0$
gives a bounded interval).
The error stems from describing the periodic component as a \emph{subgroup} ($p\Z$)
rather than a \emph{submonoid} ($p\N$).
The correct form is $I + p\N$: when the monoid $M \subseteq \N$ has only non-negative
generators, by Theorem~\ref{thm:mixed-sets} the semi-polinear sets of $\R$ with such
monoids are exactly finite unions of sets $I + p\N$ where $I \subseteq [0,1)$ is a
rational interval and $p \geq 0$
(by the Frobenius lemma~\cite{Syl84}: every finitely generated submonoid of $\N$ is
$\{0\}$ or of the form $B + d\N$ with $B \subseteq \N$ finite and $d \geq 1$).
More generally, monoids with mixed-sign generators (e.g.\ $M = \Z$, generated by
$\{1,-1\}$) yield sets such as $I + \Z$ that are not of this form but are still
semi-polinear.

\noindent\textit{Relation to Weispfenning's ultimately periodically simple sets.}
Weispfenning works with the extended logic $\FO(\R,\Z,+,\leq,[\cdot])$, which adds the
integer-part function $[\cdot]$ to achieve quantifier elimination, and characterizes
its definable sets as the \emph{ultimately periodically simple sets}~\cite{Wei99}.
Since $[x]$ is already definable in $\FO(\R,\Z,+,\leq)$ (via $[x] \leq x < [x]+1$
with $[x] \in \Z$), both logics define the same sets.
With the submonoid correction above (replacing $p\Z$ by $p\N$), the ultimately
periodically simple sets of~\cite{Wei99} coincide exactly with the semi-polinear sets
of Theorem~\ref{thm:mixed-sets}.
\end{remark}

\begin{remark}[Debt to Weispfenning~\cite{Wei99}]
\label{rem:wei99-debt}
Although~\cite{Wei99} contains an error in the description of mixed-linear sets
(Remark~\ref{rem:wei99}), it was the source of a key structural intuition:
every set definable in $\FO(\R,\Z,+,\leq)$ decomposes as a finite union of sets
$Z + D$ with $Z \subseteq \Z^n$ definable in $\FO(\Z,+,\leq)$ and
$D \subseteq [0,1)^n$ definable in $\FO(\R,+,\leq)$.
The present paper makes this decomposition a theorem rather than a lemma from~\cite{Wei99}:
Theorem~\ref{thm:mixed-sets} establishes it from first principles via the
semi-polinear characterization, and the $Z + D$ structure is then recovered as a
consequence (taking $Z = b + M$ and $D = C$).
\end{remark}

\subsection{Mixed Definable Functions Are Piecewise-Simple}

We call a function $f : \R^n \to \R^q$ \emph{mixed-linear} if there exist matrices
$M, N \in \Q^{q \times n}$ and $\av{v} \in \Q^q$ such that
$f(\av{a}) = M\av{a} + N[\av{a}] + \av{v}$ for all $\av{a} \in \dom(f)$.

\begin{definition}
\label{def:simple}
A function $f : \R^n \to \R$ partially-defined over a semi-polinear set $D \subseteq \R^n$
is said to be \emph{simple} if there exist $\alpha \in \Q$ and $a, b \in \Q^n$ such that
\[
f(c + z) = \alpha + \sum_{i=1}^n a_i c_i + b_i z_i
\]
for any $c \in [0,1)^n$, $z \in \Z^n$ with $c + z \in D$.
A function $f$ is said \emph{piecewise-simple} if $\dom(f)$ admits a finite partition
into semi-polinear sets $D_1,\ldots,D_k$ such that $f|_{D_j}$ is simple for each $j$.
\end{definition}

\begin{remark}
\label{rem:simple}
A simple function is linear in the fractional part $c \in [0,1)^n$ and linear in the
integer part $z \in \Z^n$, but with potentially different slopes.
The integer-part function $[\cdot] : \R \to \Z$ is simple (with $a = 0$, $b = 1$),
as is the fractional part $\{\cdot\}$ (with $a = 1$, $b = 0$); neither is piecewise linear.
Every simple function is mixed-linear (with $N = b - a$ in the formula
$f(\av{a}) = M\av{a} + N[\av{a}] + \av{v}$), so every piecewise-simple function is
definable in $\FO(\R,\Z,+,\leq)$.
\end{remark}

\begin{theorem}
\label{thm:mixed-functions}
A function $f : \R^n \to \R^q$ is definable in $\FO(\R,\Z,+,\leq)$ if and only if
it is piecewise-simple.
\end{theorem}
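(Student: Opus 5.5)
The easy direction is already done in Remark~\ref{rem:simple}. A piecewise-simple function is a finite union of graphs of simple functions on semi-polinear pieces. Each such graph is defined by the formula for its semi-polinear piece (Theorem~\ref{thm:mixed-sets}) conjoined with $\av{y} = M\av{x} + N[\av{x}] + \av{v}$. Here $[\cdot]$ is definable, and rational coefficients are handled by clearing denominators. For $q>1$ we work componentwise: a function into $\R^q$ is piecewise-simple when each coordinate is, and we refine to the common partition, which is possible since semi-polinear sets are closed under intersection (Proposition~\ref{prop:closure}).

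For the converse, let $G_f$ be definable. By Theorem~\ref{thm:mixed-sets}, $G_f$ is a finite union of polinear sets $L = C + b + P^*$ with $C \subseteq [0,1)^{n+q}$ and $P \subseteq \Z^{n+q}$. By Lemma~\ref{lem:decomp} we may assume $G_f = L$ is a single polinear set, since the projections $D_i$ are semi-polinear by component removal. Any overlap between pieces is removed afterwards by Boolean closure, since restricting a simple function to a smaller set keeps it simple. Write points of $L$ as $(c,z)$ with $c = (c_x,c_y) \in C$ and $z = (z_x,z_y) \in b+P^*$. The key observation is that, because the fractional/integer decomposition is unique, $L$ is exactly the product-like set $\{c+z \mid c \in C, z \in b+P^*\}$. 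Two facts then follow.

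First, fix $z_0 \in b+P^*$. The set $\{(c_x,c_y) : c \in C\}$ is the graph of the function $c_x \mapsto c_y$, because two points of $L$ with the same $x$ and the same $z$-part differ only in $c_y$. This contradicts functionality of $L$ unless their $y$-values coincide, and since the $z$-parts are equal, $c_y$ must coincide. So $C$ is the graph of a function definable in $\FO(\R,+,\leq)$ on a single polyhedral convex set. Exactly as in the proof of Theorem~\ref{thm:real}, that function is $\Q$-linear, so $c_y = A c_x + \av{u}$.

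Symmetrically, fixing $c_0 \in C$ shows that $b+P^*$ is the graph of a function $z_x \mapsto z_y$. By the argument of Theorem~\ref{thm:integer}, that function is $\Q$-linear: $z_y = B z_x + \av{w}$.

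A point $\av{a} = c_x + z_x$ of the domain therefore maps to $c_y + z_y = A c_x + B z_x + \av{u} + \av{w}$, which is simple in each coordinate. One subtlety remains: the fractional part of $\av{a}$ is $c_x$ and its integer part is $z_x$, as required by Definition~\ref{def:simple}. This holds since $c_x \in [0,1)^n$.

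The main obstacle I expect is the functionality transfer from $L$ to its two factors $C$ and $b+P^*$. One must check carefully that uniqueness of the decomposition, together with the non-emptiness of both $C$ and $b+P^*$, really forces each factor to be a graph. Once that is in place, the rest is a direct reuse of the arguments of Theorems~\ref{thm:real} and~\ref{thm:integer}.
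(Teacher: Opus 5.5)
Your proposal is correct. The easy direction matches the paper's (yours is, if anything, more explicit about defining $[\cdot]$ and clearing denominators). The converse takes a genuinely different, and stronger, route.

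Both arguments start the same way. They reduce to a single polinear component $C+b+P^*$ of the graph and transfer functionality to the two factors. Your two \emph{fixing} arguments are exactly what is needed and use only $C\neq\emptyset$ and $b+P^*\neq\emptyset$, so the obstacle you flag at the end is already resolved.

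The paper then works with $q=1$ only:
\begin{itemize}
\item For the real factor it shows merely that $g$ is piecewise-simple. It covers $\pi_x(C)$ by the projections of the sets $C_I$ on which some defining inequality of $C$ is tight.
\item For the integer factor it finds $v\in P^\perp$ with $v_{n+1}\neq 0$, using $(P^\perp)^\perp=\mathrm{span}_\Q(P)$.
\item Vector-valued $f$ is then handled by a common refinement of the coordinatewise partitions.
\end{itemize}

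You instead reuse the \emph{proofs} of Theorems~\ref{thm:real} and~\ref{thm:integer}, not just their statements. Those proofs do establish that a function whose graph is a single polyhedral convex set is $\Q$-linear on its entire domain (convexity, Proposition~\ref{prop:stable}, density). Likewise, a function whose graph is a single linear set is $\Q$-linear (a shift, then Proposition~\ref{prop:stable}).

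This buys three things:
\begin{itemize}
\item $f$ is simple, not just piecewise-simple, on the projection of each polinear component of its graph.
\item Hence the final semi-polinear partition has at most as many pieces as the graph has polinear components.
\item All $q$ are treated at once, with no case analysis on the constraints of $C$.
\end{itemize}
What the paper's tight-constraint argument buys in exchange is that its pieces are read directly off the inequality description of $C$, which its effectivity paragraph relies on.
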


\begin{proof}
If $f$ is piecewise-simple, then its graph is a finite union of semi-polinear sets:
on each piece, the equation defining the output is mixed-linear in the integer and
fractional parts.  Hence $f$ is definable by Theorem~\ref{thm:mixed-sets}.

For the converse, first assume $q=1$ and let $f : \R^n \to \R$ be definable in
$\FO(\R,\Z,+,\leq)$.
By Theorem~\ref{thm:mixed-sets}, its graph is semi-polinear.  Using
Lemma~\ref{lem:decomp}, each polinear component is the graph of the restriction of
$f$ to a semi-polinear domain.  The finitely many domains can be refined, using
Proposition~\ref{prop:closure}, into a semi-polinear partition.  It is therefore
enough to treat one polinear component of the graph; write it as
$G_f = C + b + M$ where $C \subseteq [0,1)^{n+1}$ is polyhedral convex,
$b \in \Z^{n+1}$, and $M = P^*$ for some finite $P \subseteq \Z^{n+1}$.

Since $\mathbf{0} \in M$, we have $C + b \subseteq G_f$; since $G_f$ is a function graph,
$C$ is the graph of a function $g : \pi_x(C) \to [0,1)$
(where $\pi_x : \R^{n+1} \to \R^n$ projects onto the first $n$ coordinates).
Similarly, $M = P^*$ is the graph of a function $h : \pi_x(M) \to \Z$.
Vectors $s \in \R^n$ are uniquely decomposed into $c + z$ with $c \in [0,1)^n$ and
$z \in \Z^n$, so $f(c + z) = b_{n+1} + g(c) + h(z - (b_1,\ldots,b_n))$.

\textit{The function $g$.}
As $C$ is a polyhedral convex set, it is the set of solutions of a conjunction of
equalities and inequalities $u_1 c_1 + \cdots + u_{n+1} c_{n+1} \mathbin{\#} \alpha$.
If one constraint is an equality with $u_{n+1} \neq 0$, then $C$ is the graph of a
simple function.
Otherwise, for each inequality $I$ of the form
$\beta + a_1 c_1 + \cdots + a_n c_n \mathbin{\#} c_{n+1}$
appearing in the description of $C$, let $C_I$ be the set of vectors $c \in C$ satisfying
$\beta + a_1 c_1 + \cdots + a_n c_n = c_{n+1}$.
The restriction of $g$ to $\exists_{n+1} C_I$ is simple.
Every point in $\dom(g)$ satisfies at least one such inequality tightly.
Indeed, fix $(c_1,\ldots,c_n) \in \dom(g)$: the fiber
$\{c_{n+1} \mid (c_1,\ldots,c_n,c_{n+1}) \in C\}$
is a polyhedral convex subset of $\R$ that is a singleton (since $g$ is a function).
A singleton polyhedral convex set in $\R$ is the intersection of its lower and upper
bounds, so at least one bounding inequality is tight at $c_{n+1} = g(c_1,\ldots,c_n)$.
Thus $\dom(g)$ is covered by the finite union of the sets $\exists_{n+1} C_I$.
The covering can again be refined into a finite polyhedral, hence semi-polinear,
partition.  Hence $g$ is piecewise-simple.

\textit{The function $h$.}
We introduce the \emph{orthogonal} $S^\perp$ of a set $S \subseteq \Q^{n+1}$ as the set
of vectors $v \in \Q^{n+1}$ such that $s_1 v_1 + \cdots + s_{n+1} v_{n+1} = 0$ for any $s \in S$.
We recall from~\cite{Sch87} that $(S^\perp)^\perp$ is exactly the $\Q$-linear span of $S$.

Assume by contradiction that $P^\perp \subseteq \Q^n \times \{0\}$.
Every $v \in P^\perp$ has $v_{n+1} = 0$, so $v \cdot e_{n+1} = 0$ for all $v \in P^\perp$,
hence $e_{n+1} \in (P^\perp)^\perp = \mathrm{span}_\Q(P)$.
Write $e_{n+1} = \sum_j r_j p_j$ with $r_j \in \Q$.
Choose $d \geq 1$ with $dr_j \in \Z$ for all $j$ and write $dr_j = n_j^+ - n_j^-$ with $n_j^\pm \in \N$.
Set $p^+ = \sum_j n_j^+ p_j \in P^*$ and $p^- = \sum_j n_j^- p_j \in P^*$.
For $i = 1,\ldots,n$: $(e_{n+1})[i] = 0$ gives $\sum_j r_j p_j[i] = 0$,
hence $p^+[i] = \sum_j n_j^+ p_j[i] = \sum_j n_j^- p_j[i] = p^-[i]$.
But $p^+[n+1] - p^-[n+1] = d \cdot (e_{n+1})[n+1] = d \geq 1$.
Thus $p^+$ and $p^-$ both lie in $G_h = P^*$ with the same $\Z^n$-projection
but different $\Z$-outputs, contradicting $h$ being a function.

Hence there exists $v = (v_1,\ldots,v_n, v_{n+1}) \in P^\perp$ with $v_{n+1} \neq 0$.
For any $p = (\av{z}, y) \in P^*$: since $p$ is a sum of generators from $P$ and
$v \perp P$, we have $v \cdot p = 0$, giving
\[
h(\av{z}) = y = -\frac{v_1 z_1 + \cdots + v_n z_n}{v_{n+1}} = M_h \av{z},
\]
where $M_h = -v_{n+1}^{-1}(v_1,\ldots,v_n) \in \Q^{1\times n}$.
Thus $h$ is linear on $\dom(h)$ (simple with $a_i = 0$ for all $i$, since the domain is
in $\Z^n$ so fractional parts are zero).
Hence $f(c + z) = b_{n+1} + g(c) + h(z - (b_1,\ldots,b_n))$ is piecewise-simple.

It remains only to record effectivity.  Suppose that the graph is given as an
explicit finite union of polinear sets $C+b+P^*$.  The refinement of the domains
uses the effective Boolean operations on polyhedral convex sets and semilinear
sets from Proposition~\ref{prop:closure}.  For the real part $g$, the pieces are
obtained by inspecting the finitely many defining constraints of $C$: equalities
with non-zero output coefficient, or active inequalities
$\beta+a_1c_1+\cdots+a_nc_n=c_{n+1}$; their projections are computed by
Fourier--Motzkin elimination.  For the integer part $h$, one computes a vector
$v\in P^\perp$ with $v_{n+1}\neq0$ by Gaussian elimination; this gives the
linear expression for $h$ explicitly.  These operations are polynomial in the
size of the current polinear component and in the number of pieces actually
produced.  Thus the conversion from a semi-polinear graph representation to a
piecewise-simple representation is effective in output-sensitive time.

For $q>1$, apply the scalar case to each coordinate function $f_1,\ldots,f_q$.
Each coordinate yields a finite semi-polinear partition on which it is simple.
Taking a common refinement of these partitions, using Proposition~\ref{prop:closure},
gives a finite semi-polinear partition on which all coordinates are simple
simultaneously.  Hence $f$ is piecewise-simple as a vector-valued function.
\end{proof}

\begin{corollary}
\label{thm:mixed-pw}
A function is definable in $\FO(\R,\Z,+,\leq)$ if and only if it is
$\FO(\R,\Z,+,\leq)$-piecewise mixed-linear.
\end{corollary}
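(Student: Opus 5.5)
The plan is to derive the corollary from Theorem~\ref{thm:mixed-functions} together with Theorem~\ref{thm:mixed-sets}. Here ``$\FO(\R,\Z,+,\leq)$-piecewise mixed-linear'' is read in the same way as $\mathcal{L}$-piecewise linear: there is a finite partition $\Pi$ of $\dom(f)$ into sets definable in $\FO(\R,\Z,+,\leq)$ such that $f|_X$ is mixed-linear for every $X \in \Pi$.

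For the forward direction, let $f$ be definable. Theorem~\ref{thm:mixed-functions} gives a finite partition of $\dom(f)$ into semi-polinear sets $D_1,\ldots,D_k$ such that each $f|_{D_j}$ is simple; in the vector-valued case this means each coordinate is simple on $D_j$. By Theorem~\ref{thm:mixed-sets}, each $D_j$ is mixed-linear, so the partition is $\FO(\R,\Z,+,\leq)$-definable.

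It then remains to check that a simple function is mixed-linear, as noted in Remark~\ref{rem:simple}. Write $\av{a} = c + z$ with $z = [\av{a}]$ and $c = \av{a} - [\av{a}]$. Then
\[
\alpha + \sum_i a_i c_i + b_i z_i \;=\; \sum_i a_i \av{a}[i] + \sum_i (b_i - a_i)[\av{a}][i] + \alpha .
\]
This has the required form $M\av{a} + N[\av{a}] + \av{v}$ with $M = a$, $N = b - a$ and $\av{v} = \alpha$. For $q>1$, stack these rows coordinatewise to obtain $M, N \in \Q^{q\times n}$.

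For the converse, let $f$ be piecewise mixed-linear on a definable partition $X_1,\ldots,X_k$, with $f(\av{a}) = M_j\av{a} + N_j[\av{a}] + \av{v}_j$ on $X_j$. The graph $G_f$ is the union over $j$ of the sets defined by
\[
\av{x} \in X_j \wedge \exists \av{w}\,\bigl(\textstyle\bigwedge_i (\av{w}[i] \in \Z \wedge \av{w}[i] \leq \av{x}[i] < \av{w}[i] + 1) \wedge \av{y} = M_j\av{x} + N_j\av{w} + \av{v}_j\bigr).
\]
The rational coefficients are handled by clearing denominators, multiplying through by a common integer and expressing integer multiples by repeated addition. This shows $G_f$ is definable.

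Alternatively, one can observe that mixed-linear pieces are simple, again by the identity $\av{a} = c + z$, and that definable sets are semi-polinear, and then apply Theorem~\ref{thm:mixed-functions} directly. I expect no real obstacle. The only point requiring care is the bookkeeping between the two notions of partition, semi-polinear versus definable, and Theorem~\ref{thm:mixed-sets} settles it in both directions.
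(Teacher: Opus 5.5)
Your proof is correct and follows the paper's route. The forward direction is exactly the paper's one-line argument: Theorem~\ref{thm:mixed-functions}, plus the observation that a simple function is mixed-linear with $N = b - a$ (Remark~\ref{rem:simple}), plus the fact that semi-polinear pieces are definable (Theorem~\ref{thm:mixed-sets}). For the converse you either write a defining formula using $\av{w} = [\av{x}]$, or note that $M\av{a} + N[\av{a}] + \av{v} = Mc + (M+N)z + \av{v}$ is simple and that definable pieces are semi-polinear by Theorem~\ref{thm:mixed-sets}; this explicit treatment fills in the direction that the paper's one-line proof leaves implicit.
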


\begin{proof}
Every piecewise-simple function is mixed-linear (Remark~\ref{rem:simple}),
so the result follows from Theorem~\ref{thm:mixed-functions}.
\end{proof}

\section{Conclusion}

We have established a unified algebraic framework for sets and functions definable in
the three additive theories.
In $\FO(\Z,+,\leq)$: semi-linear sets (Theorem~\ref{thm:GS}) and piecewise linear
functions (Theorem~\ref{thm:integer}).
In $\FO(\R,+,\leq)$: polyhedral sets and piecewise linear functions
(Theorem~\ref{thm:real}).
In $\FO(\R,\Z,+,\leq)$: semi-polinear sets (Theorem~\ref{thm:mixed-sets}) and
piecewise-simple functions (Theorem~\ref{thm:mixed-functions}).

All proofs are purely algebraic, using only linear algebra and the Ginsburg--Spanier
theorem; no automata or quantifier elimination procedures are needed.
We also correct an error in Weispfenning~\cite{Wei99} regarding the structure of
mixed-linear sets of $\R$ (Remark~\ref{rem:wei99}).

\paragraph{Open problem.}
The algebraic characterization of $\FO(\Z,+,V_p)$-definable functions
(B\"uchi arithmetic~\cite{Sem77}) --- the analogue of Theorem~\ref{thm:integer}
for Presburger arithmetic extended with the $p$-adic valuation ---
appears to be open.

\section*{Acknowledgements}

This paper is a substantially revised version of an unpublished manuscript by the authors from 2008.
The authors used a large language model (Claude by Anthropic) as a writing and verification assistant during the revision process.

\bibliographystyle{alphaurl}
\bibliography{presburger_gs}

\end{document}